\documentclass[11pt]{article}
\usepackage{amssymb}
\usepackage{amsmath}
\usepackage{amsfonts,mathabx}
\usepackage{amsthm}

\usepackage{comment}
\usepackage[textwidth=2cm]{todonotes}
\usepackage{enumerate}
\usepackage[linesnumbered, ruled, lined, boxed,]{algorithm2e}
\usepackage{authblk}
\usepackage[T1]{fontenc}
\usepackage{listings}
\usepackage{booktabs}
\usepackage[toc,page]{appendix}
\setlength\parindent{0,0mm}
\setlength{\parskip}{2,0mm}
\usepackage{setspace}
\usepackage[left=3.5cm,top=3cm,right=3cm,bottom=2.5cm]{geometry}
\usepackage[]{algorithm2e}

\newtheorem{theorem}{Theorem}[section]
\newtheorem{definition}[theorem]{Definition}
\newtheorem{lemma}[theorem]{Lemma}
\newtheorem{corollary}[theorem]{Corollary}

\newcommand{\N}{\mathbb{N}}

\newcommand{\M}{\mathcal{M}}

\renewcommand{\P}{\mathcal{P}}
\newcommand{\R}{\mathbb{R}}
\newcommand{\Q}{\mathbb{Q}}
\newcommand{\G}{\mathcal{G}}
\renewcommand{\H}{\mathcal{H}}
\usepackage[colorlinks,linkcolor=blue,citecolor=blue,urlcolor=blue]{hyperref}

\begin{document}
	\pagestyle{plain}
	\title{Computing Equilibria in Atomic Splittable Polymatroid Congestion Games with Convex Costs}	
	\author[1]{Tobias Harks}
	\author[2]{Veerle Timmermans \thanks{This work is part of the research programme \emph{Optimal Coordination Mechanisms for Distributed Resource Allocation} with project number 617.001.302, which is (partly) financed by the Netherlands Organisation for Scientific Research (NWO). }}
	\affil[1]{Department of Mathematics, Augsburg University}
	\affil[2]{Department of Management Science, RWTH Aachen}	
	\date{\today}
	\maketitle
	\begin{abstract}
In this paper, we compute $\epsilon$-approximate Nash equilibria in atomic splittable polymatroid congestion games with convex Lipschitz continuous cost functions. The main approach relies on computing a pure Nash equilibrium for an associated \emph{integrally-splittable} congestion game, where players can only split their demand in integral multiples of a common packet size. It is known that one can compute pure Nash equilibria for integrally-splittable congestion games within a running time that is pseudo-polynomial in the aggregated demand of the players. As the main contribution of this 
paper, we decide for every $\epsilon>0$, a packet size $k_{\epsilon}$ and prove that the associated $k_{\epsilon}$-splittable Nash equilibrium is an $\epsilon$-approximate Nash equilibrium for the original game. 

We further show that our result applies to multimarket oligopolies with decreasing, concave Lipschitz continuous price functions and quadratic production costs: there is a polynomial time transformation to atomic splittable polymatroid congestion games implying that we can compute $\epsilon$-approximate Cournot-Nash equilibria  within pseudo-polynomial time.
	\end{abstract}


\section{Introduction}
\label{sec:introduction} 
Congestion games are a fundamental problem class in the operations research 
and algorithmic game theory literature. 
They model equilibrium problems in transportation~\cite{Haurie85}, communication~\cite{Korilis1997,Orda93} or logistics systems~\cite{Cominetti09}
and  from a theoretical point of view, they are frequently
used as benchmark models for new concepts or algorithms.
While by now, several results regarding the existence and computability of equilibria
are known (cf.~\cite{Ackermann08,CaragiannisFGS15,Gairing13,Harks:existence,Skopalik08}), there is one particular subclass of congestion games, namely \emph{atomic splittable congestion games},  that is not well understood.   In such game, 
there is set of resources and a set of players, where  each
 player is associated with a positive demand. In a  simple symmetric version, a strategy of every player is a (possibly fractional) distribution of the player-specific demand over the resources. Resources have load-dependent, non-decreasing, and convex costs and
the total private cost of a player is simply the sum of player's costs over the resources. 
Even for this seemingly simple model, there are currently no efficient algorithms computing  (approximate) pure Nash equilibria known. \footnote{See Definition~\ref{def:approx-eq} for the term ``$\epsilon$-approximate equilibrium''.}

\subsection{Related Work}
Very recently (and independently of our work) Bhaskar and Lolakapuri~\cite{bhaskar18} devised two algorithms computing an $\epsilon$-approximate equilibrium. Both algorithms run in exponential time in either the number of resources or the number of players, respectively. 
They are are based on  guessing the \emph{marginal costs} of the players at an equilibrium. These marginal cost appear to have several monotonicity properties, which they exploit using a high-dimensional binary search algorithm. 

Harks and Timmermans~\cite{TimmermansIPCO2017} developed a  polynomial time algorithm that computes exact Nash equilibria when the cost functions are affine. Their idea is to compute a pure Nash equilibrium for an associated \emph{integrally splittable} congestion game. In such games, players can only split their demand in integral multiples of a common \emph{packet size}. 
The class of integrally splittable singleton congestion games has been studied before by Tran-Thanh et al.~\cite{Tran11} for the case of player-independent convex cost functions. Later, this problem was studied by Harks et al.~\cite{harks2014resource,harks2016resource}  for the more general
case of polymatroid strategy spaces and player-specific convex cost functions. In particular, Harks et al. proved that the algorithm by Tran-Thanh et al. has a running time which is pseudo-polynomial in the aggregated load of the players (cf. Corollary 5.2~\cite{harks2016resource}).

Deligkas et al. \cite{Deligkas2016} study the computation of $\epsilon$-approximate equilibria in general concave games with compact strategy spaces and Lipschitz continuous cost functions. In their paper, they decide on a number $k$, discretize the strategy space and only consider $k$-uniform points, i.e., vectors where all elements are integer multiples of $k$. Then, as for each player only finitely many of these vectors exist, they enumerate all feasible $k$-uniform strategy profiles, and pick the best candidate (see also Lipton et al.~\cite{LiptonMM03} for a similar approach). This method results in an algorithm that finds $\epsilon$-approximate equilibria in exponential time.

\subsection{Our Results}
In this paper, we compute $\epsilon$-approximate Nash equilibria for atomic splittable \emph{polymatroid} congestion games with non-decreasing, convex and Lipschitz continuous cost functions. In such a game, the strategy space of every player is polymatroid base polytope
which strictly generalizes the singleton setting  considered by Bhaskar and Lolakapuri~\cite{bhaskar18}.
 Our main approach --  similar to the approach in~\cite{TimmermansIPCO2017} -- relies on computing a pure Nash equilibrium for an associated integrally splittable congestion game.
We prove that for each $\epsilon>0$, we can compute a packet size $k_{\epsilon}$ that such that the $k_{\epsilon}$-integral equilibrium is guaranteed to be an $\epsilon$-approximate equilibrium. Thus, this implies that one can compute an $\epsilon$-approximate equilibrium within pseudo-polynomial time by using the algorithm by Harks et al.~\cite{harks2014resource,harks2016resource}. 
In contrast to the algorithms of Bhaskar and Lolakapuri~\cite{bhaskar18}, our algorithm works
even for player-specific cost functions.

We further show that our result applies to multimarket oligopolies  (cf. Bulow~\cite{Bulow85}) with decreasing, concave Lipschitz continuous price functions and quadratic production costs. We give a polynomial time transformation to atomic splittable congestions games with player-specific convex costs, thus,  our algorithm and the transformation can be used to compute  $\epsilon$-approximate equilibria for multimarket oligopolies.

\section{Preliminaries}
\label{sec:preliminaries}


\paragraph{Polymatroids.}
Let $E=\{e_1,\dots, e_m\}$ be a finite set of resources and  $\rho:2^E\rightarrow\R $ be  (1) submodular, i.e, $\rho(U)+\rho(V) \geq \rho(U \cup V) + \rho(U \cap V)$ for all $U, V \subseteq E$, (2)
 monotone, i.e., $\rho(U) \leq \rho(V)$ for all $U \subseteq V$ and (3) normalized, i.e, $\rho(\emptyset) = 0$.  Then, the pair $(E,\rho)$ is called a \emph{polymatroid} and  the associated polyhedron is defined as:
\[ \P_{\rho} := \left\{  x \in \R_{\geq 0}^{E} \mid x(U) \leq \rho(U) 
\; \forall U \subseteq E \; \right\},\]
where $x(U) := \sum_{e \in U} x_e$ for all $U\subseteq E$. Given a polyhedron $\P_{\rho}$ and a rational $d \in \Q_{>0}$ with $d \leq x(E)$, a polymatroid base polytope of rank $d$ is defined as: 
\[ \P_{\rho}(d) := \left\{  x \in \R_{\geq 0}^{E} \mid x(U) \leq \rho(U) 
\; \forall U \subseteq E, \; x(E) = d \right\}.\]

\paragraph{Atomic splittable polymatroid congestion games.}
An atomic splittable polymatroid congestion game is represented  by the tuple:
$
\G:= \left(N,E,(d_i)_{i \in N}, (\rho_i)_{i \in N},(c_{i,e})_{i \in N, e \in E} \right).
$
Here,  $N=\{1,\dots, n\}$ is a finite player set and with every $i\in N$, we associate a player-specific polymatroid $(E,\rho_i)$. The strategy space of player $i\in N$ is defined as the (player-specific)  polymatroid base polytope 
\[ \P_{\rho_i}(d_i):= \left\{  x_i \in \R_{\geq 0}^{E} \mid x_i(U) \leq \rho_i(U) \; \forall U \subseteq E, \; x_i(E) = d_i\right\}.\]
The combined strategy space is denoted by $\P := \prod _{i \in N} \P_{\rho_i}(d_i)$ and we denote by ${x}=({x}_i)_{i\in N}$ the overall strategy profile. The entry $x_{i,e}$ of the vector $x_i$  is the load of player $i$ on $e \in E$ and  $x_e:=\sum_{i\in N}x_{i,e}$ is defined as the total load on $e$. Resources have player-specific cost functions $c_{i,e}(x_e)$, where $c_{i,e}(x_e)$ is non-decreasing, non-negative, differentiable and convex. We further  assume that all cost functions $c_{i,e}(x)$ and their derivatives $c'_{i,e}(x)$ are Lipschitz continuous:
\begin{definition}[Lipschitz continuity]
	A function $c: \mathbb{R} \rightarrow \mathbb{R}$ is called Lipschitz continuous, if there exists a constant $L > 0$ such that for all $x,y \in R$:
	$|c(y)-c(x)| \leq L|y-x|. $
	Here $L$ is called the Lipschitz constant.
\end{definition}

The total cost of player $i$ in strategy distribution ${x}$ is defined as
$\pi_{i}(x)=\sum_{e\in E} c_{i,e}(x_e)\,x_{i,e}.$
The goal of each player is to choose a strategy $x_i$ such that her personal cost $\pi_i(x)$ is minimized.
As pure Nash equilibria in these games are not guaranteed to be rational,  the notion of $\epsilon$-approximate equilibria are suitable.
\begin{definition}[$\epsilon$-approximate equilibrium]\label{def:approx-eq}
	Strategy $x$ is an $\epsilon$-approximate equilibrium for game $\G$,  if for each $i \in N$, and every strategy $y_i \in \mathcal{P}_i$: 
	$\pi_i(y_i,x_{-i}) \geq \pi_i(x_i,x_{-i}) - \epsilon. $
\end{definition}
A pair $\bigl(x,(y_i,x_{-i})\bigr) \in \P\times \P$ is called an \emph{improving move} of player $i$, if $\pi_i(x_i,x_{-i}) > \pi_i(y_i,x_{-i})$. Given $x_{-i} \in \P_{-i}(d_{-i})$, a strategy $x_i \in \P_i(d_i)$  is called a \emph{best response} of player $i$ to $x_{-i}$ if $\pi_i(x_i,x_{-i}) \leq \pi_i(y_i,x_{-i})$ for all $y_i \in \P_i(d_i)$. 

When $E, N,(\rho_i)_{i \in N}$ and $(c_{i,e})_{i \in N, e \in E}$ are clear from the context, we refer to the game as $\G((d_i)_{i \in N})$, and write $\P_i(d_i)$ instead of $\P_{\rho_i}(d_i)$. For each $i \in N$, we
write $\P_{-i}(d_{-i}) = \prod_{j \neq i} \P_{j}(d_j)$ and $x = (x_i,x_{-i})$ meaning that $x_i \in \P_i(d_i)$ and $x_{-i} \in \P_{-i}(d_{-i})$.

\paragraph{Integral Polymatroid Congestion Games.}
\label{sec:integralsplittablecongestiongames}
A $k$-integral polymatroid congestion game is given by the tuple $\G_k:= \left(N,E,(d_i)_{i \in N}, (\rho_i)_{i \in N}, (c_{i,e})_{i \in N, e \in E},k \right)$ with $k\in \Q_{>0}$ such that $d_i/k \in\N$ for all $i\in N$. Here, players cannot choose any strategy in the polymatroid base polytope, but only $k$-integral points. Thus, the strategy space of player $i$ is defined by the (player-specific) $k$-integral polymatroid base polytope:
\[ P^k_{\rho_i}(d_i):= \left\{  x_i \in \R_{\geq 0}^{E} \mid x_i(U) \leq \rho_i(U) \; \forall U \subseteq E, \; x_i(E) = d_i \text{ and } x_{i,e} =kq, q \in \N_{\geq 0}\right\}.\]
When $E, N,(\rho_i)_{i \in N}$ and $(c_{i,e})_{i \in N, e \in E}$ are clear from the context, we refer to the game as $\G_k((d_i)_{i \in N})$ and to the strategy spaces as  $P^k_i(d_i)$. Similar to atomic splittable congestion games, the complete strategy space of the game is defined by $\P^k := \prod_{i \in N} \P^k_i(d_i)$. For $i \in N$, we write $\P^k_{-i}(d_{-i}) = \prod_{j \neq i} \P^k_{j}(d_j)$ and $x = (x_i,x_{-i})$ meaning that $x_i \in \P^k_i(d_i)$ and $x_{-i} \in \P^k_{-i}(d_{-i})$. A strategy profile $x$ is an \emph{equilibrium} if $\pi_i(x) \leq \pi_i(y_i, x_{-i})$ for all $i \in N$ and $y_i \in \P^k_i(d_i)$. A pair $\bigl(x,(y_i,x_{-i})\bigr) \in \P^k\times \P^k$ is called an \emph{improving move} of player $i$, if $\pi_i(x_i,x_{-i}) > \pi_i(y_i,x_{-i})$. Given $x_{-i} \in \P^k_{-i}(d_{-i})$, a strategy $x_i \in \P^k_i(d_i)$  is called a \emph{best response} of player $i$ to $x_{-i}$ if $\pi_i(x_i,x_{-i}) \leq \pi_i(y_i,x_{-i})$ for all $y_i \in \P^k_i(d_i)$.

Harks et al.~\cite{harks2016resource} developed an algorithm that computes an exact Nash equilibrium for $k$-integral splittable polymatroid congestion games with non-negative, increasing and convex cost functions. The running time of this algorithm is pseudo-polynomial in the aggregated demand of the players. Their algorithm~\cite[Algorithm 1]{harks2016resource} starts with an equilibrium for the game where the demand for each player is set to zero: $d'_i=0$. This game has a unique equilibrium, where $x_{i,e}=0$ for each $i \in N$ and $e \in E$. Then, they repeatedly look for a player for who $d'_i<d_i$. For this player, they increase $d'_i$ by $k$ and a preliminary equilibrium with respect to the current demands $d'_i$ is recomputed by following a sequence of best responses of the players. The running time of~\cite[Algorithm 1]{harks2016resource} is $O(nm (d/k)^3)$, where $d$ is an upper bound on the demands $d_i$.

\section{Lipschitz Continuity and Approximate Equilibria}
\label{sec:lipschitz continuity and approximate equilibria}
As our main result, we prove that when all cost functions and their derivatives are \emph{Lipschitz continuous} with Lipschitz constant $L$, we can compute a packet size $k_{\epsilon}$ such that the exact $k_{\epsilon}$-integral equilibrium is an $\epsilon$-approximate atomic splittable equilibrium. Combined with~\cite[Algorithm 1]{harks2016resource}, this gives a pseudo-polynomial time algorithm computing an $\epsilon$-approximate equilibrium for atomic splittable polymatroid congestion games with convex cost functions.

For preparing the necessary technical tools, in Section~\ref{todo} we first discuss a sufficient and necessary conditions for a strategy $x$ to be either an atomic splittable equilibrium or a $k$-integral equilibrium. Then, in Section~\ref{sec:epsilonapproximateequilibrium}, we use these conditions to prove that for any $\epsilon>0$, we can find a packet size $k_{\epsilon}$ such that the exact $k_{\epsilon}$-integral equilibrium is an $\epsilon$-approximate atomic splittable equilibrium.

\subsection{Equilibrium Conditions in $k$-Integral Games}
\label{todo}
In atomic splittable congestion games, the \emph{marginal cost} for player $i$ on resource $e$ is defined as: $\mu_{i,e} (x) = c_{i,e}(x_e) + x_{i,e}c'_{i,e}(x_e).$
Intuitively, it represents the cost increase of player $i$ when she would increase her load on resource $e$. For $k$-integral games, the marginal costs are defined as follows: 
\begin{align} \label{marginalcostup}
\mu_{i,e}^{+k} (x) &= (x_{i,e}+k)c_{i,e}(x_e+k) - x_{i,e}c_{i,e}(x_e), \\
\mu_{i,e}^{-k} (x) &=
\begin{cases}
x_{i,e}c_{i,e}(x_e) -  (x_{i,e}-k)c_{i,e}(x_e-k),
& \text{if } x_{i,e} >  0,\\
-\infty,
& \text{if } x_{i,e}\leq 0.
\end{cases}
\end{align}
Here, $\mu_{i,e}^{+k} (x)$ is the cost for player $i$ to add one packet of size $k$ to resource $e$ and $\mu_{i,e}^{-k} (x)$ is the gain for player $i$ for removing a packet of size $k$ from resource $e$. When we apply Theorem 8.1~\cite{Fujishige05} to $k$-integral games, we obtain the following lemma: 

\begin{lemma}\label{lem:discreteexchange}
	Given an atomic splittable congestion game $\G$, and a packet size $k$ such that $\rho_i(U)/k\in \N$ for all $U \subseteq E$ and all $i \in N$. Given a strategy $x_{-i} \in \P^k_{-i}(d_{-i})$ for $k$-integral splittable game $\G_k$, then strategy $x_i \in P^k_i(d_i)$ is a best response for player $i$ if and only if for every pair $(e,f) \in E^2$ the following holds. If there exist an $\alpha >0$ such that:
	$x_i+\alpha(\chi_e-\chi_{f}) \in \P_i(d_i), $
	then we have:
	$\mu^{+k}_{i,e}(x) \geq \mu^{-k}_{i,f}(x).$
\end{lemma}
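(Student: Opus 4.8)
The statement asserts an exchange-type characterization of best responses in the $k$-integral game, so the natural route is to invoke the general discrete separation / exchange theorem for submodular (polymatroid) systems, namely Theorem~8.1 of~\cite{Fujishige05}, and translate its hypotheses and conclusion into the language of our cost functions. The plan is therefore to set up the correspondence between the best-response problem for player $i$ and a separable discrete convex minimization problem over the $k$-integral polymatroid base polytope $\P^k_i(d_i)$, then read off the local optimality condition supplied by the cited theorem, and finally verify that this condition is exactly the inequality $\mu^{+k}_{i,e}(x)\geq \mu^{-k}_{i,f}(x)$ for all feasible exchange directions $\chi_e-\chi_f$.

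First I would fix $x_{-i}$ and regard $\pi_i(x_i,x_{-i})=\sum_{e\in E} c_{i,e}(x_e)\,x_{i,e}$ as a function of $x_i$ alone. The key reformulation is to write this as a sum of per-resource contributions $g_e(x_{i,e}) := c_{i,e}(x_{i,e}+x_{-i,e})\,x_{i,e}$, where $x_{-i,e}=\sum_{j\neq i} x_{j,e}$ is a fixed constant. Each $g_e$ is a one-dimensional function defined on the grid $\{0,k,2k,\dots\}$, and I would check that its discrete forward difference is precisely $g_e(x_{i,e}+k)-g_e(x_{i,e}) = \mu^{+k}_{i,e}(x)$ and its backward difference is $g_e(x_{i,e})-g_e(x_{i,e}-k) = \mu^{-k}_{i,f}(x)$ (with the $-\infty$ convention capturing infeasibility at the boundary $x_{i,e}=0$). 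The hypothesis $\rho_i(U)/k\in\N$ guarantees that $\P^k_i(d_i)$ is a genuine integral polymatroid base polytope after rescaling by $k$, so the theory applies verbatim.

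Next I would appeal to Theorem~8.1 of~\cite{Fujishige05}, which gives a local-to-global optimality criterion for minimizing a separable discrete convex function over such a base polytope: a point is a minimizer if and only if no single feasible unit exchange $x_i+k(\chi_e-\chi_f)$ strictly decreases the objective. Writing out that the objective change under such an exchange equals $g_e(x_{i,e}+k)-g_e(x_{i,e}) - \bigl(g_f(x_{i,f})-g_f(x_{i,f}-k)\bigr) = \mu^{+k}_{i,e}(x)-\mu^{-k}_{i,f}(x)$, the no-improving-exchange condition is exactly $\mu^{+k}_{i,e}(x)\geq\mu^{-k}_{i,f}(x)$ whenever the exchange direction is feasible, i.e.\ whenever there exists $\alpha>0$ with $x_i+\alpha(\chi_e-\chi_f)\in\P_i(d_i)$. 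This matches the claimed condition in both directions.

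The main obstacle I anticipate is not the algebra but verifying that the cited theorem applies: one must confirm that the per-resource functions $g_e$ are discretely convex in the relevant sense (M-convexity or the separable-convex hypothesis the theorem requires), and that the feasibility of an $\alpha$-scaled continuous exchange direction in $\P_i(d_i)$ is equivalent to feasibility of the single $k$-step exchange in the $k$-integral polytope $\P^k_i(d_i)$. Discrete convexity of $g_e$ follows from convexity and monotonicity of $c_{i,e}$ together with $x_{i,e}\geq 0$, which I would confirm by checking that the forward differences $\mu^{+k}_{i,e}$ are non-decreasing in $x_{i,e}$. The feasibility equivalence is the delicate polymatroid point: I would use that the exchangeability of a unit between $e$ and $f$ along a base polytope is governed by the tight-set structure of $\rho_i$, and that the $k$-integrality assumption on $\rho_i$ ensures the continuous and $k$-integral exchange regions coincide, so that the local condition furnished by Theorem~8.1 can be stated in the continuous-direction form given in the lemma.
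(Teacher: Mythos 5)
Your proposal is correct and takes essentially the same route as the paper's proof: both reduce the best-response characterization to Theorem~8.1 of~\cite{Fujishige05} for minimizing a separable discretely convex objective over the base polytope, and both use the hypothesis $\rho_i(U)/k\in\N$ to pass from feasibility of a continuous exchange $x_i+\alpha(\chi_e-\chi_f)\in\P_i(d_i)$ to feasibility of a $k$-integral exchange (the paper does this by rounding to $\alpha'=k\lceil\tfrac{1}{k}\alpha\rceil$). Your write-up merely makes explicit the per-resource differences $g_e(x_{i,e}+k)-g_e(x_{i,e})=\mu^{+k}_{i,e}(x)$ and the discrete convexity verification that the paper's terse proof leaves implicit.
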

\begin{proof}
	Assume there exists an $\alpha>0$ such that: $x_i+\alpha(\chi_e-\chi_{f}) \in \P_i(d_i).$ Then, as $\rho_i(U)/k\in \N$ for all $U \subseteq E$ and all $i \in N$, we can define $\alpha'= k \lceil \tfrac{1}{k}\alpha \rceil$, for which holds that $x_i+\alpha'(\chi_e-\chi_{f}) \in \P^k_i(d_i)$. Then, Theorem 8.1~\cite{Fujishige05} implies that $\mu^{+k}_{i,e}(x) \geq \mu^{-k}_{i,f}(x).$
\end{proof}

When the cost functions $c_{i,e}(x)$ and their derivatives $c'_{i,e}(x)$ are Lipschitz continuous with Lipschitz constant $L$, we obtain the following relations between marginal costs in atomic splittable games and $k$-integral games:
\begin{lemma}\label{marginalbound}
	For any feasible strategy $x$ for game $\G_k$, we have for each $i \in N$ and $e \in E$:
$ \tfrac{1}{k}\mu^{+k}_{i,e}(x) \leq \mu_{i,e}(x) + kL (d_i+1)\text{ and }
		\tfrac{1}{k}\mu^{-k}_{i,e}(x) \geq \mu_{i,e}(x) - k L (d_i+1) \text{ whenever } x_{i,e}>0$.

\end{lemma}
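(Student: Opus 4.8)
The plan is to prove both inequalities directly from the definitions of the marginal costs, using Lipschitz continuity of $c_{i,e}$ and $c'_{i,e}$ to control the discretization error. Let me sketch the approach for the upper bound on $\tfrac{1}{k}\mu^{+k}_{i,e}(x)$; the lower bound on $\tfrac{1}{k}\mu^{-k}_{i,e}(x)$ is entirely symmetric.

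First I would expand the definition
\[
\mu^{+k}_{i,e}(x) = (x_{i,e}+k)c_{i,e}(x_e+k) - x_{i,e}c_{i,e}(x_e),
\]
and rewrite it by adding and subtracting $x_{i,e}c_{i,e}(x_e+k)$, so that
\[
\mu^{+k}_{i,e}(x) = k\,c_{i,e}(x_e+k) + x_{i,e}\bigl(c_{i,e}(x_e+k)-c_{i,e}(x_e)\bigr).
\]
This splits the expression into a ``function value'' term and a ``discrete derivative'' term, which is convenient because the marginal cost $\mu_{i,e}(x)=c_{i,e}(x_e)+x_{i,e}c'_{i,e}(x_e)$ has exactly the same two-term structure. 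Dividing by $k$, the first term becomes $c_{i,e}(x_e+k)$, which I compare to $c_{i,e}(x_e)$ using Lipschitz continuity of $c_{i,e}$: the gap is at most $Lk$. For the second term, $\tfrac{1}{k}\bigl(c_{i,e}(x_e+k)-c_{i,e}(x_e)\bigr)$ is a forward difference quotient that I want to compare to $c'_{i,e}(x_e)$. By convexity the difference quotient is at least the derivative at the left endpoint, but to get an upper bound I would instead write the difference quotient as (a value of the derivative at some intermediate point, via the mean value theorem, or directly as) $c'_{i,e}(\xi)$ for some $\xi\in[x_e,x_e+k]$, and then bound $c'_{i,e}(\xi)-c'_{i,e}(x_e)\le L\lvert\xi-x_e\rvert\le Lk$ using Lipschitz continuity of the derivative $c'_{i,e}$. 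Multiplying this second bound by the factor $x_{i,e}$ (which is at most $d_i$ since $x_{i,e}\le x_i(E)=d_i$) yields an error of at most $x_{i,e}Lk\le d_i Lk$.

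Collecting the two error contributions gives
\[
\tfrac{1}{k}\mu^{+k}_{i,e}(x) \le \bigl(c_{i,e}(x_e)+Lk\bigr) + \bigl(x_{i,e}c'_{i,e}(x_e)+x_{i,e}Lk\bigr) = \mu_{i,e}(x) + Lk(1+x_{i,e}) \le \mu_{i,e}(x)+kL(d_i+1),
\]
which is exactly the claimed inequality; the $(d_i+1)$ factor is the sum of the constant-term error ($1$) and the derivative-term error ($x_{i,e}\le d_i$). For the lower bound on $\tfrac{1}{k}\mu^{-k}_{i,e}(x)$ when $x_{i,e}>0$, I would perform the analogous rewriting $\mu^{-k}_{i,e}(x) = k\,c_{i,e}(x_e-k) + x_{i,e}\bigl(c_{i,e}(x_e)-c_{i,e}(x_e-k)\bigr)$ (here using $x_{i,e}>0$ so the packet removal is well-defined), and apply the same two Lipschitz estimates with the inequalities reversed, arriving at $\tfrac{1}{k}\mu^{-k}_{i,e}(x)\ge\mu_{i,e}(x)-kL(d_i+1)$.

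The main obstacle I anticipate is the second term, i.e.\ bounding the forward difference quotient of $c_{i,e}$ by $c'_{i,e}(x_e)$ within an additive $Lk$. The cleanest route is the mean value theorem to produce the intermediate point $\xi$ followed by Lipschitz continuity of $c'_{i,e}$; one must be careful that it is the Lipschitz property of the \emph{derivative} (not of $c_{i,e}$ itself) that is needed here, which is precisely why the hypothesis assumes both $c_{i,e}$ and $c'_{i,e}$ are Lipschitz. A secondary point to handle carefully is the factor multiplying each term: the bound $x_{i,e}\le d_i$ must be invoked to turn the $x_{i,e}$-weighted error into the uniform $d_i$, and for the lower bound one must verify the sign conventions so that the error subtracts rather than adds.
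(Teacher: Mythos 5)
Your proposal is correct and takes essentially the same route as the paper's proof: the identical decomposition of $\tfrac{1}{k}\mu^{+k}_{i,e}(x)$ into $c_{i,e}(x_e+k)$ plus $x_{i,e}$ times the forward difference quotient, with the two Lipschitz estimates contributing the errors $kL$ and $x_{i,e}kL \leq d_i kL$, and the symmetric argument for $\mu^{-k}_{i,e}$. The only cosmetic difference is that you bound the difference quotient by $c'_{i,e}(\xi)$ via the mean value theorem before invoking Lipschitz continuity of $c'_{i,e}$, whereas the paper first bounds it by $c'_{i,e}(x_e+k)$ using convexity and then applies the same Lipschitz estimate; the two steps are interchangeable.
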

\begin{proof} 
	We start by proving $\mu^{+k}_{i,e}(x) /k \leq \mu_{i,e}(x) + kL (d_i+1)$. We obtain:
	\begin{eqnarray*}
		\mu^{+k}_{i,e}(x) / k &=&\frac{ (x_{i,e}+k)c_{i,e}(x_{e}+k) - x_{i,e}c_{i,e}(x_e) }{k} \\
		& = &  x_{i,e} \frac{c_{i,e}(x_{e}+k)-c_{i,e}(x_{e})}{k} + c_{i,e}(x_{e}+k)\\
		& \leq_1 & x_{i,e} c'_{i,e}(x_e + k) + c_{i,e}(x_{e}+k)\\
		& \leq_2 & x_{i,e} (c'_{i,e}(x_e) + kL) + c_{i,e}(x_e) + kL\\
		& = & x_{i,e} c'_{i,e}(x_e) + c_{i,e}(x_e) + (x_{i,e}+1)kL\\
		& \leq &  \mu_{i,e}(x) + kL (\delta+1), \\
	\end{eqnarray*}
where $\delta:=\max_{i \in N} \{d_i\}$. Inequality $\leq_1$ holds as $c_{i,e}$ is convex and increasing, which implies $c'_{i,e}(x_e + k) \geq (c_{i,e}(x_{e}+k)-c_{i,e}(x_{e}))/k$. We obtain inequality $\leq_2$ using Lipschitz constant $L$, as it implies $c_{i,e}(x_e + k)  \leq c_{i,e}(x_e) + kL$ and $ c'_{i,e}(x_e + k)  \leq c'_{i,e}(x_e) + kL$. The second inequality can be obtained in a similar way.
\end{proof}

\begin{lemma} \label{lem:exchangevalue}
	Given an atomic splittable congestion game $\G$, and a packet size $k$ such that $\rho_i(U)/k\in \N$ for all $U \subseteq E$ and all $i \in N$. Let $x_i \in \P^k_i(d_i)$ be a strategy for player $i$ in a $k$-integral game. Then, if there exists an $\alpha >0$ such that
	$x_i + \alpha (\chi_f-\chi_e) \in \P_i(d_i), $
	it holds that
	$\mu_{i,f}(x_k) - \mu_{i,e}(x_k) \geq - 2k_{\epsilon}L (\delta+1).$
\end{lemma}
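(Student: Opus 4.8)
The plan is to combine the best-response characterization of Lemma~\ref{lem:discreteexchange} with the marginal-cost comparison of Lemma~\ref{marginalbound}, being careful that the exchange direction here is $\chi_f-\chi_e$, which is the reverse of the direction $\chi_e-\chi_f$ appearing in Lemma~\ref{lem:discreteexchange}. Concretely, I read the hypothesis as saying that $x_i$ is a best response of player $i$ in the $k$-integral game $\G_k$ (this equilibrium property is what makes the conclusion possible), and I write $x_k=(x_i,x_{-i})$ with $k=k_\epsilon$.

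First I would apply Lemma~\ref{lem:discreteexchange} to the ordered pair $(f,e)$ rather than $(e,f)$. The assumed existence of an $\alpha>0$ with $x_i+\alpha(\chi_f-\chi_e)\in\P_i(d_i)$ is exactly the feasibility premise of that lemma with the roles of $e$ and $f$ interchanged, so the best-response property yields the discrete inequality $\mu^{+k}_{i,f}(x_k)\geq\mu^{-k}_{i,e}(x_k)$. I would also observe that subtracting a positive amount $\alpha$ from coordinate $e$ while staying in $\R_{\geq 0}^{E}$ forces $x_{i,e}>0$; this is precisely the side condition under which the $\mu^{-k}$ estimate of Lemma~\ref{marginalbound} is valid.

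Next I would invoke Lemma~\ref{marginalbound} twice to pass from the discrete marginal costs to the continuous ones. The upper bound $\tfrac{1}{k}\mu^{+k}_{i,f}(x_k)\leq\mu_{i,f}(x_k)+kL(d_i+1)$ rearranges to $\mu_{i,f}(x_k)\geq\tfrac{1}{k}\mu^{+k}_{i,f}(x_k)-kL(d_i+1)$, and the lower bound $\tfrac{1}{k}\mu^{-k}_{i,e}(x_k)\geq\mu_{i,e}(x_k)-kL(d_i+1)$ rearranges to $\mu_{i,e}(x_k)\leq\tfrac{1}{k}\mu^{-k}_{i,e}(x_k)+kL(d_i+1)$. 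Subtracting the second from the first gives
\[
\mu_{i,f}(x_k)-\mu_{i,e}(x_k)\ \geq\ \tfrac{1}{k}\bigl(\mu^{+k}_{i,f}(x_k)-\mu^{-k}_{i,e}(x_k)\bigr)-2kL(d_i+1).
\]
Plugging in the discrete inequality from the first step makes the bracketed term nonnegative, so the first summand drops out, and bounding $d_i\leq\delta$ with $k=k_\epsilon$ yields $\mu_{i,f}(x_k)-\mu_{i,e}(x_k)\geq-2k_\epsilon L(\delta+1)$, as claimed.

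The only real subtlety to watch is the bookkeeping of signs and directions: one must match the reversed exchange to Lemma~\ref{lem:discreteexchange} correctly, and ensure that both applications of Lemma~\ref{marginalbound} contribute an error of the same sign so that the two $kL(d_i+1)$ terms add to $2kL(d_i+1)$ rather than cancelling. Everything else is a direct substitution, so I expect no genuine difficulty beyond this careful alignment and the justification that $x_{i,e}>0$.
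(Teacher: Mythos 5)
Your proof is correct and takes essentially the same route as the paper's: apply Lemma~\ref{lem:discreteexchange} with the roles of $e$ and $f$ interchanged to obtain $\mu^{+k}_{i,f}(x_k)\geq\mu^{-k}_{i,e}(x_k)$, note that feasibility of $x_i+\alpha(\chi_f-\chi_e)$ forces $x_{i,e}>0$, and then sandwich both discrete marginals between the continuous ones via Lemma~\ref{marginalbound} to collect the two error terms into $-2kL(\delta+1)$. Your explicit remark that the hypothesis must be read as $x_i$ being a best response in $\G_k$ (which is what licenses the appeal to Lemma~\ref{lem:discreteexchange}) correctly fills in an assumption the paper's statement leaves implicit but its proof also uses.
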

\begin{proof}
	Lemma~\ref{lem:discreteexchange} states that $\mu^{-k}_{i,e}(x_k) \leq \mu^{+k}_{i,f}(x_k).$
	As $x_i + \alpha (\chi_f-\chi_e) \in \P_i(d_i),$ we have $x_{i,e}>0$. We combine this with Lemma~\ref{marginalbound} to obtain that:
	$$\mu_{i,e}(x_k) - k L (\delta+1) \leq \tfrac{1}{k} \mu^{-k}_{i,e}(x_k) \leq \tfrac{1}{k} \mu^{+k}_{i,f}(x_k) \leq\mu_{i,f}(x_k) + kL (\delta+1).$$
	We rewrite the previous inequality, and obtain the desired statement:
	$$\mu_{i,f}(x_k) - \mu_{i,e}(x_k) \geq - 2kL (\delta+1).$$	
\end{proof}

\subsection{Atomic Splittable and $k$-Integral Equilibria}
\label{sec:epsilonapproximateequilibrium}
In this section we focus on finding $\epsilon$-approximate equilibria for atomic splittable congestion games with increasing, non-negative, differentiable and convex cost functions, where both the original function and its derivative are bounded by Lipschitz constant $L$. In order to do so,  for each $\epsilon>0$ we define a $k_{\epsilon}$, and prove that the $k_{\epsilon}$-integral equilibrium will be an $\epsilon$-approximate atomic splittable equilibrium.

\begin{theorem} \label{thm:suffpacketsize}
	Given an atomic splittable game $\G$, where all cost functions and their derivatives are bounded by a Lipschitz constant $L$. Then, for any $\epsilon>0$, there exists a $k_{\epsilon}>0$ such that an exact equilibrium $x$ for the $k_{\epsilon}$-integral splittable game $\G_{k_{\epsilon}}$ is an $\epsilon$-equilibrium for $\G$.
\end{theorem}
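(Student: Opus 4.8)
The plan is to fix an exact equilibrium $x$ of $\G_{k_\epsilon}$ and show that for every player $i$ and every continuous deviation $y_i \in \P_i(d_i)$ the gain $\pi_i(x_i,x_{-i}) - \pi_i(y_i,x_{-i})$ is at most $\epsilon$, once $k_\epsilon$ is small enough. The starting observation is that, for fixed $x_{-i}$, the cost $\pi_i(\cdot,x_{-i})$ is separable and convex in $x_i$: each summand $c_{i,e}(x_{i,e}+s_e)\,x_{i,e}$ (with $s_e=\sum_{j\ne i}x_{j,e}$) has second derivative $c''_{i,e}(x_e)x_{i,e}+2c'_{i,e}(x_e)\ge 0$ since $c_{i,e}$ is convex and non-decreasing and $x_{i,e}\ge 0$, and its partial derivative in $x_{i,e}$ is exactly $\mu_{i,e}(x)$. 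Hence the first-order inequality for convex functions yields
\[ \pi_i(x_i,x_{-i}) - \pi_i(y_i,x_{-i}) \;\le\; \sum_{e\in E}\mu_{i,e}(x)\,(x_{i,e}-y_{i,e}), \]
so it suffices to bound the right-hand side.

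Next I would decompose $x_i-y_i$ into elementary exchanges feasible at $x_i$. Writing $P=\{e: x_{i,e}>y_{i,e}\}$ and $Q=\{f: x_{i,f}<y_{i,f}\}$, the exchange property of polymatroid base polytopes provides weights $\beta_{ef}\ge 0$ for $e\in P,\ f\in Q$ with $x_i-y_i=\sum_{e\in P,\,f\in Q}\beta_{ef}(\chi_e-\chi_f)$, such that for each pair with $\beta_{ef}>0$ the exchange $x_i+\alpha(\chi_f-\chi_e)\in\P_i(d_i)$ is feasible for some $\alpha>0$. Substituting gives
\[ \sum_{e\in E}\mu_{i,e}(x)(x_{i,e}-y_{i,e}) = \sum_{e\in P,\,f\in Q}\beta_{ef}\bigl(\mu_{i,e}(x)-\mu_{i,f}(x)\bigr). \]
For each active pair, feasibility of the exchange lets me invoke Lemma~\ref{lem:exchangevalue}, giving $\mu_{i,f}(x)-\mu_{i,e}(x)\ge -2k_\epsilon L(\delta+1)$, i.e. $\mu_{i,e}(x)-\mu_{i,f}(x)\le 2k_\epsilon L(\delta+1)$. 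Since the total moved mass equals $\sum_{e\in P}(x_{i,e}-y_{i,e})\le d_i\le\delta$, I obtain $\pi_i(x_i,x_{-i})-\pi_i(y_i,x_{-i})\le 2k_\epsilon L\,\delta(\delta+1)$.

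Finally, choosing any admissible packet size $k_\epsilon\le \epsilon/\bigl(2L\,\delta(\delta+1)\bigr)$ makes this bound at most $\epsilon$ for all players and all deviations simultaneously, which is exactly Definition~\ref{def:approx-eq}. Here ``admissible'' means $k_\epsilon$ must also satisfy the divisibility hypotheses $d_i/k_\epsilon\in\N$ and $\rho_i(U)/k_\epsilon\in\N$ required by Lemma~\ref{lem:exchangevalue}; since all data $d_i$ and $\rho_i(U)$ are rational, a common rational unit divides all of them, and any sufficiently fine integer subdivision of that unit is simultaneously admissible and small enough.

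The main obstacle is the decomposition step: everything hinges on expressing $x_i-y_i$ through exchanges each of which is feasible at the point $x_i$ \emph{itself} (not merely along a path from $y_i$ to $x_i$), because Lemma~\ref{lem:exchangevalue} only controls marginal-cost differences for exchanges feasible at the integral equilibrium. I would establish this via the simultaneous/symmetric exchange property of base polytopes (the fractional analogue of the matroid basis-exchange matching, cf.\ Fujishige), verifying that the support of the decomposition consists of pairs admitting a feasible single exchange at $x_i$. This is the only place the polymatroid structure is genuinely used, and where I expect the real work to lie.
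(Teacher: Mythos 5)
Your proposal is correct and follows essentially the same route as the paper: the first-order convexity bound, a decomposition of $x_i-y_i$ into pairwise exchanges each feasible at $x_i$ itself (the paper realizes exactly this as a capacitated transshipment in a bipartite graph, citing an external lemma for its existence --- the same crux you correctly identify and defer to the polymatroid exchange property), an application of Lemma~\ref{lem:exchangevalue} to each exchange, and a choice of $k_\epsilon$ as a fine subdivision of a common rational unit to meet the divisibility hypotheses. The only difference is your sharper accounting --- bounding the total exchanged mass by $\delta$ instead of bounding each of up to $m^2$ flows by $\delta$ separately --- which admits a packet size a factor $m^2$ larger than the paper's choice in~\eqref{eq:defk}, and you even patch a small omission by ensuring $d_i/k_\epsilon\in\N$ explicitly.
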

\begin{proof}
	As $\pi_i$ is convex, for any alternative strategy $y_i \in \P_i(d_i)$, we have that:
	\begin{equation} \label{varineq}\pi_i(y_i,x_{-i}) \geq \pi_i(x_i,x_{-i}) + \nabla_i \pi_i(x_i,x_{-i})\cdot (y_i - x_i). \end{equation}
	Thus, our goal is to determine a $k_{\epsilon}$ that bounds 
	$\nabla_i \pi_i(x_i,x_{-i})\cdot (y_i - x_i)$ from below by $\epsilon$. 
	We define $E^{i,+}:=\{e \in E | y_{i,e} >x_{i,e} \} $ and $E^{i,-}:=\{e \in E | x_{i,e} > y_{i,e} \}$.  Note that:
	\begin{equation} \label{eq1}
	\nabla_i \pi_i  (x_i,x_{-i})\cdot (y_i - x_i)
	=  \sum_{e \in E^{i,+}} \mu_{i,e}(x)(y_{i,e} - x_{i,e})+\sum_{e \in E^{i,-}} \mu_{i,e}(x)(y_{i,e} - x_{i,e}).   
	\end{equation}

	Consider the complete, directed, bipartite graph $G(x_i,y_i)$ on node sets $E^{i,-}$ and $E^{i,+}$, where each node $e \in E^{i,-}$ has a supply of $x_{i,e} - y_{i,e}$ and each node $e \in E^{i,+}$ has a demand of $y_{i,e} - x_{i,e}$. The edges of $G(x_i,y_i)$ are directed from $E^{i,-}$ to $E^{i,+}$ and the capacity $c_{e,f}(x_i,y_i)$ of edge $e,f \in  E^{i,-} \times E^{i,+}$ is defined as:
	$$c_{e,f}(x_i,y_i) = \max\{\alpha| x_i+\alpha(\chi_{f} - \chi_{e}) \in \P_i(d_i) \}.$$	

	Then, as $y_i$ and $x_i$ are points in the polymatroid base polytope $\P_i(d_i)$, there exists a transshipment  $t$ in $G(x_i,y_i)$ from resources in $E^{i,-}$ to resources in $E^{i,+}$ that exactly satisfies all supplies, demands and capacities (Lemma 3.2~\cite{Timmermans2017}).
	We denote by $t_{e,f}$ the amount of load transshipped from resource $e$ to resource $f$ in $t$, thus: $\sum_{f \in E^{i,+}} t_{e,f} = x_{i,e}-y_{i,e}$ if $e \in E^{i,-}$ and $\sum_{f \in E^{i,-}} t_{f,e} = y_{i,e}-x_{i,e}$ if $e \in E^{i,+}$
	Using this transshipment, we rewrite~\eqref{eq1} in terms of $t$. Hence,
	
\begin{align} 
	&\nabla_i \pi_i  (x_i,x_{-i})\cdot (y_i - x_i) \nonumber \\
	&=  \sum_{e \in E^{i,+}} \mu_{i,e}(x)(y_{i,e} - x_{i,e})+\sum_{e \in E^{i,-}} \mu_{i,e}(x)(y_{i,e} - x_{i,e}) \nonumber	\\	
	&=  \sum_{e \in E^{i,+}} \mu_{i,e}(x)\left(\sum_{f \in E^{i,-}} t_{f,e} \right)-\sum_{e \in E^{i,-}} \mu_{i,e}(x)\left(\sum_{f \in E^{i,+}} t_{e,f} \right) \nonumber	\\	
&=  \sum_{f,e \in E^{i,-} \times E^{i,+}} \mu_{i,e}(x) t_{f,e}-\sum_{e,f \in E^{i,-} \times E^{i,+}} \mu_{i,e}(x)t_{e,f} \nonumber	\\	
&=  \sum_{e,f \in E^{i,-} \times E^{i,+}} \mu_{i,f}(x) t_{e,f}-\sum_{e,f \in E^{i,-} \times E^{i,+}} \mu_{i,e}(x)t_{e,f} \nonumber	\\	
&=  \sum_{e,f \in E^{i,-} \times E^{i,+}} (\mu_{i,f}(x) - \mu_{i,e}(x))t_{e,f}. \label{transshipment}
	\end{align}	
	
	Note that, in order to use Lemma~\ref{lem:exchangevalue}, we need a packet size $k$ such that $\rho_i(U)/k\in \N$ for all $U \subseteq E$ and all $i \in N$. Note that $\rho_i(U) \in \Q_{\geq 0}$ for all $U \subseteq E$ and all $i \in N$, hence, we define 
	$$\rho_{\gcd}:=\max\{a \in \Q_{>0} \mid a \leq 1 \text{ and }  \forall i \in N, U \subseteq E, \; \exists \ell \in \N \text{ such that } \rho_i(U)=a\cdot\ell \}.$$
	
	Given any $\epsilon>0$, we define 
	\begin{equation} \label{eq:defk} k_{\epsilon}=\frac{\rho_{\gcd}}{\left\lceil \frac{2m^2L\delta(\delta +1)}{\epsilon} \right\rceil}.\end{equation}
	Note that $k_{\epsilon}$ has the following two properties: (1) As $\rho_{\gcd}\leq 1$, we know that $k_{\epsilon} \leq \frac{\epsilon}{2m^2L\delta(\delta +1)}$; (2) As $\rho_{\gcd}/k_{\epsilon} \in \N$, we know that $\rho_i(U)/k_{\epsilon}\in \N$ for all $U \subseteq E$ and all $i \in N$.
	
	We prove that the $k_{\epsilon}$-integral equilibrium is also an $\epsilon$-approximate equilibrium for the corresponding atomic splittable game. Using Lemma~\ref{lem:exchangevalue}, we know that if there exists an $\alpha >0$ such that:
	$x_i + \alpha (\chi_{f}-\chi_e) \in \P_i(d_i), $ we have that:
	\begin{equation} \label{eq:marginalcompare} 
		\mu_{i,f}(x) - \mu_{i,e}(x) \geq  - 2k_{\epsilon}L (\delta+1) \geq -\frac{\epsilon}{m^2 \delta}.
	\end{equation}
	
	By the choice of transshipment $t$, we have that 
	$x_i + t_{e,f} (\chi_{f}-\chi_e) \in \P_i(d_i).$ We combine Equation~\eqref{eq:marginalcompare} with Equation~\eqref{transshipment} and obtain: 
	\begin{equation}  \nabla_i \pi_i(x_i,x_{-i})\cdot (y_i - x_i)  \geq -\left( \sum_{e,f \in E^{i,-} \times E^{i,+}} \frac{\epsilon}{m^2 \delta} t_{e,f} \right). \end{equation}
	Note that $t_{e,f} \leq \delta$ and $|E^{i,-} \times E^{i,+}|< m^2$. Hence:
	$ \nabla_i \pi_i(x_i,x_{-i})\cdot (y_i - x_i) > -\epsilon.$	
	Using Equation~\eqref{varineq} we obtain:
	$\pi_i(y_i,x_{-i}) > \pi_i(x_i,x_{-i}) - \epsilon.$
	Thus, player $i$ cannot gain more than $\epsilon$ by playing an alternative strategy $y_i$. As player $i$ was chosen arbitrarily, $x$ is an $\epsilon$-approximate equilibrium.
\end{proof}

\begin{corollary} Given an atomic splittable singleton game, where the cost functions are non-negative, increasing, differentiable, convex, and where both the original function as its derivative are bounded by a Lipschitz constant, we can compute an $\epsilon$-approximate equilibrium within a running time $$O\left(nm\left(\frac{\delta}{\rho_{\gcd}} \left\lceil \frac{2m^2L\delta(\delta +1)}{\epsilon} \right\rceil\right)^3\right).$$
\end{corollary}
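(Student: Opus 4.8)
The plan is to obtain the corollary as a direct composition of Theorem~\ref{thm:suffpacketsize} with the pseudo-polynomial equilibrium algorithm of Harks et al.~\cite{harks2016resource}, both instantiated at the packet size $k_\epsilon$ defined in~\eqref{eq:defk}. I would first fix $k=k_\epsilon$ and check that the $k_\epsilon$-integral game $\G_{k_\epsilon}$ is well-posed, i.e.\ that $d_i/k_\epsilon\in\N$ for every $i\in N$, since otherwise the strategy spaces $\P^{k_\epsilon}_i(d_i)$ need not contain a base. For a singleton game the base polytope is the scaled simplex $\{x_i\in\R_{\geq 0}^E \mid x_i(E)=d_i\}$, whose rank function satisfies $\rho_i(E)=d_i$. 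Property~(2) of $k_\epsilon$ guarantees $\rho_i(U)/k_\epsilon\in\N$ for all $U\subseteq E$, so taking $U=E$ already yields $d_i/k_\epsilon\in\N$. This simultaneously supplies the divisibility hypothesis needed to invoke Lemmas~\ref{lem:discreteexchange}--\ref{lem:exchangevalue} inside the proof of Theorem~\ref{thm:suffpacketsize}.

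Next I would run Algorithm~1 of~\cite{harks2016resource} on $\G_{k_\epsilon}$ to produce an exact Nash equilibrium $x$ of the $k_\epsilon$-integral game; its running time is $O\bigl(nm(d/k_\epsilon)^3\bigr)$, where $d$ may be taken to be the tightest upper bound $\delta=\max_{i\in N} d_i$ on the demands. By Theorem~\ref{thm:suffpacketsize}, this exact $k_\epsilon$-integral equilibrium is an $\epsilon$-approximate equilibrium of the original atomic splittable game $\G$, which is exactly the object we want to output. Thus correctness is immediate from the theorem, and only the running-time bookkeeping remains.

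Finally I would substitute the value of $k_\epsilon$ into the running time. Using~\eqref{eq:defk},
\[
\frac{\delta}{k_\epsilon} = \frac{\delta}{\rho_{\gcd}}\left\lceil \frac{2m^2 L \delta(\delta+1)}{\epsilon}\right\rceil,
\]
so $O\bigl(nm(\delta/k_\epsilon)^3\bigr)$ becomes precisely the claimed bound.

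I do not expect a deep obstacle here, as the statement is essentially a composition of two previously established results. The one point requiring care is that $k_\epsilon$ must be chosen to be simultaneously (a) small enough to trigger the approximation guarantee of Theorem~\ref{thm:suffpacketsize} and (b) a genuine packet size dividing every $\rho_i(U)$, and in particular every $d_i$, so that the integral game and its equilibrium computation are well-defined; the definition~\eqref{eq:defk} is engineered to meet both requirements at once. In the singleton case this divisibility is transparent because $\rho_{\gcd}$ effectively depends only on the demands $d_i$, which is also why the running time, and indeed the restriction to singleton games, is stated in terms of $\rho_{\gcd}$ rather than requiring an evaluation of $\rho_i$ over all (exponentially many) subsets as the general polymatroid case would.
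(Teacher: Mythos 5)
Your proposal is correct and follows essentially the same route as the paper: invoke Theorem~\ref{thm:suffpacketsize} to justify the packet size $k_\epsilon$ from~\eqref{eq:defk}, run Algorithm~1 of Harks et al.\ to get an exact $k_\epsilon$-integral equilibrium in time $O\bigl(nm(\delta/k_\epsilon)^3\bigr)$, and substitute $k_\epsilon$ to obtain the stated bound. Your additional check that $d_i/k_\epsilon\in\N$ (via $\rho_i(E)=d_i$ in the singleton case and property~(2) of $k_\epsilon$) is a well-posedness detail the paper's one-line proof leaves implicit, and it is a welcome refinement rather than a deviation.
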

\begin{proof}
	Assume we are given $\epsilon>0$. Using Theorem~\ref{thm:suffpacketsize}, we can then find a packet size $k_{\epsilon}$ such that for any $k \leq k_{\epsilon}$, any $k$-splittable equilibrium is an $\epsilon$-approximate equilibrium. Using the Algorithm~\cite[Algorithm 1]{harks2014resource} by Harks, Peis and Klimm, we can compute a $k$-splittable equilibrium within running time $O(nm(\delta/k)^3)$. Thus, using the definition of $k_{\epsilon}$ in~\eqref{eq:defk}, we can find $\epsilon$-approximate equilibria within the required running time. 
\end{proof}


\section{Multimarket Cournot Oligopoly}
\label{sec:cournot}
In this last section, we derive a strong connection between atomic splittable singleton congestion games with convex cost functions and multimarket Cournot oligopolies with concave, decreasing and differentiable price functions and quadratic costs. Such a game is compactly represented by the tuple
$\mathcal{M}=(N,E,(E_i)_{i \in N}, (p_{i,e})_{i \in N,e \in E_i}, (C_i)_{i \in N}),$ where $N$ is a set of $n$ firms and $E$ a set of $m$ markets. Each firm $i$ only has access to a subset $E_i \subseteq E$ of the markets and each market $e$ is endowed with  firm-specific, non-increasing, differentiable and concave price functions $p_{i,e}(t): \R \rightarrow \R$, for all $ i\in N$. In a strategy profile, a firm $i \in N$ chooses a non-negative production quantity $x_{i,e} \in \R_{\geq 0}$ for each market $e \in E_i$. We denote a strategy profile for a firm by $x_i=(x_{i,e})_{e \in E_i}$, and a joint strategy profile by $x=(x_i)_{i \in N}$. The production costs of a firm are of the form $C_i(t)=c_it^2$ for some $c_i \geq 0$. The goal of each firm $i \in N$ is to maximize its utility, which is given by:

$$u_i(x)=\sum_{e \in E_i} p_{i,e}(x_e)x_{i,e} -C_i\Big(\sum_{e \in E_i}x_{i,e}\Big), $$

where $x_e:=\sum_{i \in N}x_{i,e}$. Note that a connection between Cournot games with affine price functions and atomic splittable games with affine cost functions has already been made in~\cite{TimmermansHarks2016}. In the rest of this section we generalize the connection stated in~\cite{TimmermansHarks2016} and prove that several results that hold for atomic splittable equilibria and $k$-splittable equilibria in games with convex cost functions carry over to multimarket oligopolies with concave price functions.

Both multimarket oligopolies and atomic splittable congestion games are \emph{strategic games}. A strategic game $\G=(N,(X)_{i \in N},(u_i)_i \in N)$ is defined by a set of players $N$, a set of feasible strategies $X_i$ for each player $i \in N$ and a pay-off function $u_i(x)$ for each $i \in N$, where $x \in \bigtimes_{i \in N} X_i$. Let $\G=(N,(X_i)_{i \in N},(u_i)_i \in N)$ and $\H=(N,(Y_i)_{i \in N},(v_i)_i \in N)$ be two strategic games with identical player set $N$. Then, $\G$ and $\H$ are called \emph{isomorphic}, if for all $i \in N$ there exists a bijective function $\phi_i:X_i \rightarrow Y_i$ and $A_i \in \R$ such that $u_i(x_1, \dots x_n) = \nu_i(\phi_1(x_1), \dots ,\phi_n(x_n)) +A_i.$ Then, $(x_i)_{i \in N}$ is an equilibrium of game $\G$ if and only if $(\phi_i(x_i))_{i \in N}$ is an equilibrium of game $\H$. This implies that $(x_i)_{i \in N}$ is the unique equilibrium of game $\G$ if and only if $(\phi_i(x_i))_{i \in N}$ is the unique equilibrium of game $\H$.

\begin{theorem}\label{thrm:contcournot}
	Given a multimarket oligopoly $\M$  with concave, decreasing and differentiable price functions and quadratic costs, one can construct an atomic splittable game $\G$ with convex, increasing and differentiable costs that is isomorphic to $\M$ within polynomial time.
\end{theorem}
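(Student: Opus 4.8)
The plan is to exhibit, for each firm, an explicit strategy-space bijection and an additive constant that turn the Cournot payoff $u_i$ into the negative of a congestion cost $\pi_i$, so that the isomorphism follows directly from the equilibrium-equivalence fact recalled just before the theorem. The one genuine structural mismatch to overcome is that in $\M$ a firm's total output $q_i:=\sum_{e\in E_i}x_{i,e}$ is an endogenous and a priori unbounded quantity, whereas in an atomic splittable game the total demand $d_i=x_i(E)$ is fixed. I would resolve this with a \emph{dummy market}: for each firm $i$ introduce a market $e_i^0$ that only $i$ can access, fix the demand to a bound $d_i$ on $i$'s output, and let the bijection $\phi_i(x_i)=(x_i,\,d_i-q_i)$ route the unused quantity $d_i-q_i$ onto $e_i^0$. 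For singleton strategies the resulting strategy space $\{y_i\ge 0:\ y_i(E_i\cup\{e_i^0\})=d_i\}$ is a scaled simplex, hence a polymatroid base polytope, so the constructed $\G$ is a legitimate (singleton) atomic splittable game.

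Next I would define the cost functions. On each real market set $c_{i,e}(t):=-p_{i,e}(t)$; since $p_{i,e}$ is non-increasing, concave and differentiable, $c_{i,e}$ is non-decreasing, convex and differentiable, exactly as required. On the dummy market set $c_{i,0}(t):=c_i(t-2d_i)$, which is affine with non-negative slope $c_i$, hence non-decreasing, convex and differentiable. The whole point of the dummy construction is that it converts the quadratic own-cost $C_i(q_i)=c_iq_i^2$ into a \emph{linear} resource cost: a direct computation gives $c_{i,0}(d_i-q_i)(d_i-q_i)=c_iq_i^2-c_id_i^2$. Since real-market total loads are unaffected by the player-specific dummy markets (so $y_e=x_e$ for $e\in E$), plugging $\phi(x)$ into $\pi_i$ yields $\pi_i(\phi(x))=-\sum_{e\in E_i}p_{i,e}(x_e)x_{i,e}+c_iq_i^2-c_id_i^2=-u_i(x)-c_id_i^2$. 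Taking the congestion payoff to be $-\pi_i$ and $A_i:=-c_id_i^2$, this reads $u_i(x)=-\pi_i(\phi(x))+A_i$ for every $i$, which is precisely the isomorphism relation. The construction is clearly polynomial time, and I would emphasize that neither $c_{i,e}$ nor $c_{i,0}$ need be non-negative, matching the hypotheses of the statement (which, unlike the algorithmic results earlier, drops non-negativity).

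It remains to justify the choice of $d_i$ and to check that $\phi_i$ is a bijection onto the base polytope. Here I would first bound best-response outputs uniformly in the opponents' profiles: all prices are bounded above by $\bar p:=\max_{i,e}p_{i,e}(0)$, so firm $i$'s profit is at most $\bar p\, q_i-c_iq_i^2$, which is negative once $q_i>\bar p/c_i$; an elementary argument using strictly decreasing prices (finite choke price) covers the degenerate case $c_i=0$. Hence there is a polynomially computable rational bound $\bar q_i$ that no best response ever exceeds, and setting $d_i>\bar q_i$ makes the constraint $q_i\le d_i$ non-binding, so the equilibria of $\M$ coincide with those of the Cournot game restricted to $\{x_i\ge 0:\ q_i\le d_i\}$. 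On this restricted domain $\phi_i$ is a bijection onto $\{y_i\ge 0:\ y_i(E_i\cup\{e_i^0\})=d_i\}$, with inverse that simply forgets the $e_i^0$-coordinate, and the isomorphism is then immediate.

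I expect the \textbf{main obstacle} to be exactly this boundedness step: arguing that output is uniformly bounded over all opponent profiles so that a single finite demand $d_i$ suffices, and pinning down the mild assumption (e.g.\ $c_i>0$ or strictly decreasing prices) under which this holds. Once $d_i$ is fixed, the payoff identity is a one-line algebraic check and isomorphism follows from the preliminary fact. The only other points needing care are bookkeeping of signs (decreasing price $\leftrightarrow$ increasing cost, maximization $\leftrightarrow$ minimization) and confirming that the claimed regularity — convex, increasing, differentiable, but not necessarily non-negative — is precisely what the construction delivers.
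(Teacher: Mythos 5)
Your proposal is correct and follows essentially the same construction as the paper: the same per-firm dummy resource with affine cost $c_i(t-2d_i)$, real-market costs $c_{i,e}(t):=-p_{i,e}(t)$, the bijection $\phi_i(x_i)=(x_i,\,d_i-\sum_{e\in E_i}x_{i,e})$, and the same algebraic identity converting the quadratic production cost into the dummy-resource cost up to the constant $c_i d_i^2$. The only differences are minor: the paper simply sets $d_i:=\sum_{e\in E_i}\max\{t\mid p_{i,e}(t)=0\}$ (sum of choke quantities) rather than your profit-based bound, leaving the non-bindingness of the demand constraint implicit where you argue it explicitly, and it additionally adds a large constant to all costs to restore non-negativity (noting this shifts each $\pi_i$ by only $Cd_i$), which you correctly observe is not needed for the theorem statement itself.
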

\begin{proof}
This proof generalizes a similar transformation stated in~\cite[Theorem 7.2]{TimmermansHarks2016}. Given multimarket oligopoly $\M$, we construct an atomic splittable singleton game $\G$ as follows. For every firm $i \in N$ we create a player $i$ and we define the demand $d_i$ for this player as an upper bound on the maximal quantity that firm $i$ will produce, that is, $d_i:=\sum_{e \in E_i}\max\{t \mid p_{i,e}(t)=0\}.$
	Then, for every player $i$ we introduce a special resource $e_i$, and we define the set of allowable resources for this player as: $\tilde{E}_i=E_i \cup \{e_i\}$ with $e_i \neq e_j$ for $i \neq j$. The cost on these special resources $e_i$ are defined as: $c_{i,e_i}(t):=c_i(t-2d_i)$ for all $i \in N$, which are affine and increasing, and hence differentiable, convex and Lipschitz continuous with Lipschitz constant $c_i$. The cost on resources $e \in E_i$ are defined as $c_{i,e}(t):=-p_{i,e}(t)$ for all $i \in N$. Note that all cost functions are Lipschitz continuous with constant $L':=\max\{\{L\} \cup \{c_i\}_{i \in N}\}.$ In order to guarantee that all cost functions are non-negative, one can add a large positive constant $C$ to every cost function. Note that adding $C$ to every cost function does not change the equilibrium, it only adds $Cd_i$ to the total cost of each player. For each $i \in N$, we define the bijective function $\phi_i:E_i \rightarrow \tilde{E}$ as: 
	$$\phi_i(x_{i,1}, \dots, x_{i,m}) = (x_{i,1}, \dots, x_{i,m},d_i - \textstyle\sum_{e \in E_i}x_{i,e}) =: (x'_{i,1}, \dots, x'_{i,m}, x'_{i,m+1}).$$
One can check that the revenue $u_i((x_i)_{i \in N})$ for player $i$ in $\M$ equals the cost $\pi_i((\phi_i(x_i))_{i \in N}$ of $i$ in $\G$ minus a constant. Thus, game $\M$ and $\G$ are isomorphic. 
\end{proof}
As we are able to construct $\epsilon$-approximate equilibria within pseudo-polynomial time, Theorem~\ref{thrm:contcournot} implies the following theorem:

\begin{theorem}
 Given a multimarket oligopoly $\M$, where all cost functions are Lipschitz continuous with Lipschitz constant $L$, one can compute an $\epsilon$ approximate equilibrium within a running time that is pseudo-polynomial in $L$, $\max\{d_i \mid i \in N \}$ and $\tfrac{1}{\epsilon}$.
\end{theorem}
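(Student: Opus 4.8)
The plan is to derive this statement as a direct combination of the polynomial-time transformation of Theorem~\ref{thrm:contcournot} with the pseudo-polynomial algorithm guaranteed by Theorem~\ref{thm:suffpacketsize} and the preceding corollary. First I would apply Theorem~\ref{thrm:contcournot} to the given oligopoly $\M$ to obtain, in polynomial time, an atomic splittable singleton game $\G$ that is isomorphic to $\M$ via bijections $(\phi_i)_{i\in N}$ and additive constants $(A_i)_{i\in N}$, whose cost functions are (after adding the constant $C$) non-negative, increasing, differentiable, convex and Lipschitz continuous. I would then invoke the corollary to compute an exact $k_{\epsilon}$-integral equilibrium $x'$ of $\G$, which by Theorem~\ref{thm:suffpacketsize} is an $\epsilon$-approximate equilibrium of $\G$, and finally return $x := (\phi_i^{-1}(x'_i))_{i\in N}$ as the claimed approximate equilibrium of $\M$.

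The technical heart of the argument is to show that the isomorphism transfers the $\epsilon$-approximation guarantee, not merely exact equilibria. Writing $\nu_i := -\pi_i$ for the payoff of player $i$ in $\G$ (firms maximize $u_i$ whereas congestion players minimize $\pi_i$), the isomorphism gives $u_i(y) = \nu_i(\phi(y)) + A_i$ for every profile $y$ and every $i$. For any deviation $y_i$ of firm $i$ in $\M$, set $y'_i := \phi_i(y_i)$; since $\phi_i$ is a bijection onto the strategy space of player $i$ in $\G$, every feasible deviation in $\G$ arises this way. The additive constant $A_i$ then cancels in the difference, yielding
$$u_i(y_i,x_{-i}) - u_i(x_i,x_{-i}) = \pi_i(x'_i,x'_{-i}) - \pi_i(y'_i,x'_{-i}) \leq \epsilon,$$
where the final inequality is exactly the $\epsilon$-approximate equilibrium condition for $x'$ in $\G$. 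Hence no firm can improve its utility by more than $\epsilon$, so $x$ is an $\epsilon$-approximate equilibrium of $\M$.

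It then remains to verify that the running time is pseudo-polynomial in $L$, $\max_i d_i$ and $1/\epsilon$. Here I would track the parameters of $\G$ produced by the transformation: the resource set grows from $m$ to $m+n$ by adding one special resource $e_i$ per firm, the player set is unchanged, the demands $d_i$ are the explicitly defined upper bounds on production, and the Lipschitz constant becomes $L' = \max\{\{L\}\cup\{c_i\}_{i\in N}\}$. Substituting $m' = m+n$, $\delta = \max_i d_i$ and $k_{\epsilon}$ as in~\eqref{eq:defk} into the bound $O(nm'(\delta/k_{\epsilon})^3)$ of the corollary gives a running time that is polynomial in $n,m$ and pseudo-polynomial in $L'$, $\delta$ and $1/\epsilon$, as required.

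The step I expect to require the most care is the transfer of the approximation guarantee: the isomorphism statement in the excerpt is phrased only for exact equilibria, so one must make explicit that the additive constants $A_i$ together with the sign flip between the maximized utility and the minimized cost leave each player's deviation gain invariant, ensuring the same $\epsilon$ is preserved with no loss. The parameter bookkeeping for the running time is routine once the explicit construction in the proof of Theorem~\ref{thrm:contcournot} is consulted; the only mild subtlety there is confirming that the quantity $\rho_{\gcd}$ entering $k_{\epsilon}$ stays controlled for the simplex strategy spaces $\{x_i \geq 0 : x_i(\tilde{E}_i) = d_i\}$ arising from the transformation.
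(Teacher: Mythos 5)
Your proposal is correct and follows exactly the paper's route: the paper derives this theorem in one sentence by combining the transformation of Theorem~\ref{thrm:contcournot} with the pseudo-polynomial computation of $\epsilon$-approximate equilibria from Theorem~\ref{thm:suffpacketsize} and its corollary, leaving all details implicit. Your elaborations --- that the additive constants $A_i$ and the sign flip between $u_i$ and $\pi_i$ make the isomorphism preserve deviation gains exactly (so the $\epsilon$-guarantee, not just exact equilibria, transfers), and the parameter bookkeeping $m'=m+n$, $L'=\max\{\{L\}\cup\{c_i\}_{i\in N}\}$, including the caveat about $\rho_{\gcd}$ --- are precisely the steps the paper takes for granted.
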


\bibliographystyle{abbrv}
\bibliography{masterbib}

\end{document}